\documentclass[letterpaper,conference]{IEEEtranBB}   


\pdfminorversion=5   

\usepackage[T1]{fontenc}
\usepackage{graphicx}
	\graphicspath{{figs/}{matlab/}}   
\usepackage[cmex10]{amsmath}
\usepackage{amssymb, amsthm, gensymb}  
\usepackage{booktabs}  
\usepackage{cellspace}
	\cellspacebottomlimit=3pt   
    \cellspacetoplimit=3pt
\usepackage{float}
\usepackage{flafter}  
\usepackage{accents}
\usepackage{cases}  
\usepackage[usenames,dvipsnames]{color}

\usepackage{calc}


\newcommand{\executeiffilenewer}[3]{%
\ifnum\pdfstrcmp{\pdffilemoddate{#1}}%
{\pdffilemoddate{#2}}>0%
{\immediate\write18{#3}}\fi%
}
\newcommand{\includesvg}[1]{%
\executeiffilenewer{#1.svg}{#1.pdf}%
{inkscape -z -D --file=#1.svg %
--export-pdf=#1.pdf --export-latex}%
\ifx\svgscale\undefined%
	\newcommand{\svgscale}{\defaultsvgscale}
\fi%
\input{#1.pdf_tex}%
}

	\newcommand{\defaultsvgscale}{1.0}  

\usepackage[tight,small,bf]{subfigure}   

\usepackage{microtype}


\usepackage{xspace}
\usepackage{bbm}
\usepackage{mathrsfs}
%
%
%

%
%
%

\count255\catcode`@
\catcode`@=11
\chardef\mathlig@atcode\count255

\def\actively#1#2{\begingroup\uccode`\~=`#2\relax\uppercase{\endgroup#1~}}
\def\mathlig@gobble{\afterassignment\mathlig@next@cmd\let\mathlig@next= }
\def\mathlig@delim{\mathlig@delim}
\def\mathlig@defcs#1{\expandafter\def\csname#1\endcsname}
\def\mathlig@let@cs#1#2{\expandafter\let\expandafter#1\csname#2\endcsname}
\def\mathlig@appendcs#1#2{\expandafter\edef\csname#1\endcsname{\csname#1\endcsname#2}}

\def\mathlig#1#2{\mathlig@checklig#1\mathlig@end\mathlig@defcs{mathlig@back@#1}{#2}\ignorespaces}


\def\mathlig@checklig#1#2\mathlig@end{%
 \expandafter\ifx\csname mathlig@forw@#1\endcsname\relax
 \expandafter\mathchardef\csname mathlig@back@#1\endcsname=\mathcode`#1%
 \mathcode`#1"8000\actively\def#1{\csname mathlig@look@#1\endcsname}%
 \mathlig@dolig#1\mathlig@delim
\fi
\mathlig@checksuffix#1#2\mathlig@end
}

\def\mathlig@checksuffix#1#2\mathlig@end{%
\ifx\mathlig@delim#2\mathlig@delim\relax\else\mathlig@checksuffix@{#1}#2\mathlig@end\fi
}
\def\mathlig@checksuffix@#1#2#3\mathlig@end{%
\expandafter\ifx\csname mathlig@forw@#1#2\endcsname\relax\mathlig@dosuffix{#1}{#2}\fi
\mathlig@checksuffix{#1#2}#3\mathlig@end
}


\def\mathlig@dosuffix#1#2{%
\mathlig@appendcs{mathlig@toks@#1}{#2}%
\mathlig@dolig{#1}{#2}\mathlig@delim
}


\def\mathlig@dolig#1#2\mathlig@delim{%
 \mathlig@defcs{mathlig@look@#1#2}{%
 \mathlig@let@cs\mathlig@next{mathlig@forw@#1#2}\futurelet\mathlig@next@tok\mathlig@next}%
 \mathlig@defcs{mathlig@forw@#1#2}{%
  \mathlig@let@cs\mathlig@next{mathlig@back@#1#2}%
  \mathlig@let@cs\checker{mathlig@chck@#1#2}%
  \mathlig@let@cs\mathligtoks{mathlig@toks@#1#2}%
  \expandafter\ifx\expandafter\mathlig@delim\mathligtoks\mathlig@delim\relax\else
  \expandafter\checker\mathligtoks\mathlig@delim\fi
  \mathlig@next
 }%
 \mathlig@defcs{mathlig@toks@#1#2}{}%
 \mathlig@defcs{mathlig@chck@#1#2}##1##2\mathlig@delim{%
  \ifx\mathlig@next@tok##1%
   \mathlig@let@cs\mathlig@next@cmd{mathlig@look@#1#2##1}\let\mathlig@next\mathlig@gobble
  \fi 
  \ifx\mathlig@delim##2\mathlig@delim\relax\else
   \csname mathlig@chck@#1#2\endcsname##2\mathlig@delim
  \fi
 }%
%
 \ifx\mathlig@delim#2\mathlig@delim\else
  \mathlig@defcs{mathlig@back@#1#2}{\csname mathlig@back@#1\endcsname #2}%
 \fi
}%

\catcode`@\mathlig@atcode

\newcommand{\muspace}{\mspace{1mu}}

\DeclareRobustCommand{\scond}{\mathchoice{\muspace\vert\muspace}{\vert}{\vert}{\vert}}
\mathlig{|}{\scond}
\newcommand{\cond}{\mathchoice{\,\vert\,}{\mspace{2mu}\vert\mspace{2mu}}{\vert}{\vert}}

\DeclareRobustCommand{\discint}{\mathchoice{\mspace{-1.5mu}:\mspace{-1.5mu}}{\mspace{-1.5mu}:\mspace{-1.5mu}}{:}{:}}
\mathlig{::}{\discint}

%
%
\def\Var{\mathop{\rm Var}\nolimits}%
%
%
%
%
%
%
%
%
%
%


\newcommand{\Cc}{\mathcal{C}}

\def\eps{\epsilon}

\DeclareMathOperator\E{\textsf{E}}
\let\P\relax
\DeclareMathOperator\P{\textsf{P}}





\newcommand{\N}{\mathrm{N}}
\newcommand{\U}{\mathrm{Unif}}


\def\textiid{i.i.d.\@\xspace}
\newcommand\iid{\ifmmode\text{ i.i.d. } \else \textiid \fi}

\newcommand{\Real}{\mathbb{R}}




\def\mathllap{\mathpalette\mathllapinternal}
\def\mathllapinternal#1#2{%
  \llap{$\mathsurround=0pt#1{#2}$}}

\def\clap#1{\hbox to 0pt{\hss#1\hss}}
\def\mathclap{\mathpalette\mathclapinternal}
\def\mathclapinternal#1#2{%
  \clap{$\mathsurround=0pt#1{#2}$}}




\let\oldstackrel\stackrel
\renewcommand{\stackrel}[2]{\oldstackrel{\mathclap{#1}}{#2}}




\renewcommand{\hbar}{h\mathllap{\overline{\vphantom{h}\hphantom{\rule{4.6pt}{0pt}}}\mspace{0.77mu}}}

\catcode`~=11 
\newcommand{\urltilde}{\kern -.06em\lower -.06em\hbox{~}\kern .02em}
\catcode`~=13 

\hyphenation{Gauss-ian}
\hyphenation{qua-dra-tic}
\hyphenation{Vis-wa-nath}
\hyphenation{non-trivial}
\hyphenation{multi-letter}
\hyphenation{Gauss-ian}
\hyphenation{super-posi-tion}
\hyphenation{de-cod-er}
\hyphenation{Nara-yan}
\hyphenation{multi-message}
\hyphenation{Dimi-tris}
\hyphenation{Pol-ty-rev}
\hyphenation{multi-cast}
\hyphenation{multi-user}
\hyphenation{multi-plex-ing}
\hyphenation{bi-directional}
\hyphenation{comput}

\usepackage{caption}
\DeclareCaptionLabelSeparator{periodquad}{. \quad}
\captionsetup{margin=0pt, skip=3mm, font=small, labelfont=bf, labelsep=periodquad}  

\ifCLASSOPTIONonecolumn
	\ifCLASSOPTIONtrimmargin
		\usepackage{anysize}
		\papersize{9.7in}{5.9in}  
		\marginsize{0.2in}{0.2in}{0.1in}{0.1in} 
	\else
	\fi
\fi

\usepackage[colorlinks=true,linkcolor=black,anchorcolor=black,citecolor=black,filecolor=black,menucolor=black,urlcolor=black]{hyperref}  
	
\usepackage{cite}  

\newcommand{\T} {^\mathrm{T}}

\renewcommand{\epsilon}{\varepsilon} 
\renewcommand{\eps}{\varepsilon}
\newcommand{\abs}[1]{\lvert #1 \rvert}

\renewcommand{\det}[1]{\lvert #1 \rvert}

\providecommand{\abs}[1]{\lvert#1\rvert}

\definecolor{unemphColor}{gray}{0.5}  

\newcommand{\annleq}[1]{\overset{\text{(#1)}}{\leq}}  
\newcommand{\anngeq}[1]{\overset{\text{(#1)}}{\geq}}  


\newcommand{\natSet}[1]{\{1::#1\}}  

\renewcommand{\U}{\mathrm{Unif}}

\newcommand{\Typ}{\mathcal T_\varepsilon^{(n)}}
\newcommand{\Typprime}{\mathcal T_{\varepsilon'}^{(n)}}

\theoremstyle{definition}  
\newtheorem{thm}{Theorem}

\theoremstyle{remark}
\newtheorem{remark}{Remark}





\newcommand{\Es}{E_S}
\newcommand{\Nb}{\bar N}

\newcommand{\Zb}{\bar Z}
\newcommand{\Zbb}{\bar{ \bar Z}}


\title{Gaussian State Amplification \\with Noisy State Observations}

\IEEEoverridecommandlockouts
\author{
\IEEEauthorblockN{Bernd Bandemer} 
\IEEEauthorblockA{
ITA Center, UC San Diego \\
La Jolla, CA 92093, USA \\
Email: bandemer@ucsd.edu }
\and
\IEEEauthorblockN{Chao Tian} 
\IEEEauthorblockA{
AT\&T Labs-Research \\
Florham Park, NJ 07932, USA \\
Email: tian@research.att.com }
\and
\IEEEauthorblockN{Shlomo Shamai (Shitz)} 
\IEEEauthorblockA{
Dept.~of EE, Technion--IIT \\
Technion City, Haifa 32000, Israel \\
Email: sshlomo@ee.technion.ac.il }

\thanks{\hrule \vspace{2mm} \noindent 
The work of S.~Shamai (Shitz) was supported by the Israel Science Foundation (ISF), and the European Commission in the framework of the Network of Excellence in Wireless COMmunications NEWCOM\#.
}
}

\hypersetup{
	pdfauthor = {Bernd Bandemer, Chao Tian, and Shlomo Shamai (Shitz)},
	pdftitle = {},
	pdfsubject = {Information theory},
	pdfkeywords = {},
	pdfcreator = {LaTeX}
}

\begin{document}
\maketitle


\begin{abstract} 
The problem of simultaneous message transmission and state amplification in a Gaussian channel with additive Gaussian state is studied when the sender has imperfect noncausal knowledge of the state sequence. Inner and outer bounds to the rate--state-distortion region are provided. The coding scheme underlying the inner bound combines analog signaling and Gelfand--Pinsker coding, where the latter deviates from the operating point of Costa's dirty paper coding.
\end{abstract}


\vspace{3mm}
\section{Introduction}

Consider a Gaussian channel with additive Gaussian state in which the receiver simultaneously recovers a message communicated by the sender and estimates the state sequence. The sender facilitates this process by utilizing its own (possibly imperfect) knowledge of the state.

This problem was first investigated by Sutivong {\em et al.}~\cite{Sutivong2005} for the case when the sender has perfect knowledge of the state sequence before transmission begins. The authors show that the optimal rate--state-distortion tradeoff is achieved by dividing the available transmit power between analog state transmission and message transmission via dirty paper coding~\cite{Costa1983}. 
Subsequently, Kim {\em et al.} \cite{Kim2008} considered the discrete memoryless version of this problem and characterized the tradeoff when the state reconstruction accuracy is measured by blockwise mutual information instead of quadratic distortion. 

In this work, we are interested in the case in which the state observation at the sender is not perfect, but is encumbered by additive Gaussian state observation noise. 
%
A generalization of the problem in \cite{Sutivong2005}, this setting can be understood as modeling the original state amplification system with imperfect processing components. It also applies to the relay channel (see, for example,~\cite{ElGamalKim}), where the relay node attempts to amplify the primary transmitter's signal of which it has obtained a lossy description through another route.

Our setting contains several interesting extreme cases. 
When the state observation noise becomes negligible, our setting reverts to that of~\cite{Sutivong2005}. Conversely, the case when the state observation noise grows infinite, and thus the transmitter has no state knowledge, was studied in~\cite{Zhang2011}.
Pure state amplification without message transmission was considered in a previous work by the second author~\cite{Tian2012}, where it was shown that an analog scheme with power control is optimal. By contrast, in the case of pure message transmission without state amplification, the optimal rate is achieved by dirty paper coding~\cite{Costa1983} with respect to the observable part of the state.

In this work, we consider the general tradeoff between message transmission rate and state amplification accuracy. We propose an inner bound and two outer bounds to the rate--state-distortion region. 
The inner bound is obtained by a hybrid scheme of analog state signaling and Gelfand--Pinsker encoding, where the Gelfand--Pinsker code in general does not coincide with dirty paper coding~\cite{Costa1983}, but instead requires an optimized coefficient choice.
The first outer bound is derived by generalizing the noise-partition approach in~\cite{Tian2012}, while the second outer bound follows from careful analysis of the correlation structure in the problem. 

In the following, we first provide a precise problem definition, before discussing the inner and outer bounds in Sections~\ref{sec:ib} and~\ref{sec:ob}. We omit some details of the proofs for brevity's sake. 
Numerical examples and concluding remarks are given in Section~\ref{sec:numExConclusion}. Our mathematical notation follows~\cite{ElGamalKim}.

\vspace{3mm}
\section{Problem definition}
Consider the state-dependent memoryless channel with input $X_i \in \Real$ for $i=1,2,\dots$ and output
\begin{align*}
	Y_i &= X_i + S_i + Z_i,
\end{align*}
where $\{S_i\}$ and $\{Z_i\}$ are additive i.i.d.~state and noise sequences, respectively, distributed according to $S_i \sim \N(0,Q)$ and $Z_i\sim \N(0,N)$. The channel input is subject to an average power constraint $P$. The sender has non-causal access to the noisy state observation
\begin{align*}
	V_i &= S_i +U_i,
\end{align*}
where $\{U_i\}$ is an i.i.d.~state observation noise sequence distributed according to $U_i \sim \N(0,\sigma_u^2)$. We assume that $\{S_i\}$, $\{Z_i\}$, and $\{U_i\}$ are independent.

\begin{figure}
	\centering
	\includesvg{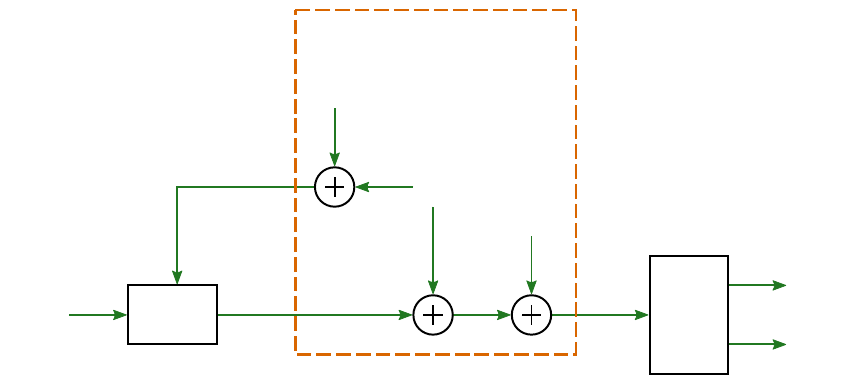}
	\caption{State amplification with noisy state observations.}
	\label{fig:stateAmpWithNoisyObs}
\end{figure}

The sender aims to communicate a message $M$ at rate $R$ reliably to the receiver. In addition to recovering the message, the receiver is also interested in obtaining an estimate $\hat S^n$ of the state sequence $S^n$, with state distortion measured by the average squared error and upper-bounded by $D$. The setup is depicted in Figure~\ref{fig:stateAmpWithNoisyObs}.

Formally, a $(2^{nR},n)$ code for the Gaussian state amplification channel with noisy state observations consists of an encoder that maps a message $m \in \natSet{2^{nR}}$ and a state observation $v^n \in \Real^n$ to a transmit sequence $x^n \in \Real^n$ and a decoder that maps the received sequence $y^n \in \Real^n$ to a message estimate $\hat m \in \natSet{2^{nR}}$ and a state estimate $\hat s^n \in \Real^n$. We assume the message is random according to  $M \sim \U \natSet{2^{nR}}$ and restrict our attention to codes that satisfy $(1/n) \sum_{i=1}^n \E[X_i^2]\leq P$.
A rate--state-distortion pair $(R,D)$ is said to be achievable if there exists a sequence of $(2^{nR},n)$ codes such that $\lim_{n\to\infty} \P\{\hat M \neq M\} = 0$ and $\limsup_{n\to\infty} \ (1/n) \sum_{i=1}^n \E(S_i-\hat S_i)^2 \leq D$.
We are interested in characterizing the rate--distortion region, i.e., the closure of the set of achievable $(R,D)$ pairs.

Before we begin our discussion, it is helpful to reformulate the problem as follows. Since the triple $(S_i, U_i, V_i)$ is jointly Gaussian, we can equivalently write
\begin{align*}
	S_i &= \tilde V_i + W_i, 
\end{align*}
with $\tilde V_i = Q/(Q+\sigma_u^2) V_i \sim \N(0,Q')$ and $W_i \sim \N(0,N')$. Here, $\{\tilde V_i\}$ and $\{W_i\}$ are independent i.i.d.~Gaussian sequences with variances 
\begin{align*}
	Q' &= \frac{Q^2}{Q+\sigma_u^2},\\
	N' &= \frac{Q \sigma_u^2}{Q+\sigma_u^2}.
\end{align*}
This equivalent system model is depicted in Figure~\ref{fig:stateAmpWithNoisyObs_alt}. Note that the channel state $S^n$ is decomposed into a perfectly observable part $\tilde V^n$ and a completely unobservable part $W^n$. However, the receiver still aims to estimate the entire state $\tilde V^n + W^n$. 

\begin{figure}[t]
	\centering
	\includesvg{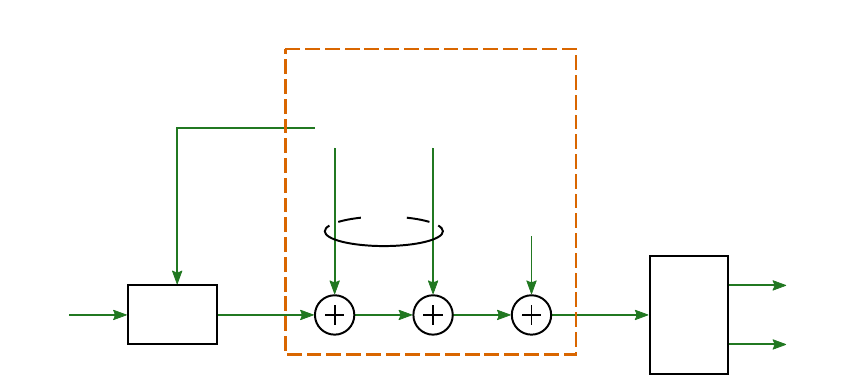}
	\caption{Equivalent system model.}
	\label{fig:stateAmpWithNoisyObs_alt}
\end{figure}

\vspace{3mm}
\section{Inner bound}  \label{sec:ib}

\noindent For a constant $\beta \in [0,1]$, let
\newcommand{\negsp}{\hspace{-0.8mm}}
\begin{align}
	\hspace*{-2mm}g &= \sqrt{(1-\beta )P/Q'}, \notag  \\
	\hspace*{-2mm}r &= \begin{bmatrix}
		(1\! +\! g)Q'+N' \\
		\alpha(1\! +\! g)Q'
	\end{bmatrix}, \label{eq:ach_rdef} \\
	\hspace*{-2mm}\Sigma &= \begin{bmatrix}
		(1\negsp+\negsp g)^2Q'+\beta  P + N'+N & \beta P + \alpha(1\negsp +\negsp g)^2Q'\\
		\beta P + \alpha(1\negsp +\negsp g)^2Q' &  \beta P + \alpha^2(1\negsp +\negsp g)^2Q'
	\end{bmatrix}.  \label{eq:ach_Rdef}
\end{align}
Then we have the following inner bound to the rate--state-distortion region.
\begin{thm} \label{thm:ib}
The rate--state-distortion $(R,D)$ is achievable if
\begin{align}
	R &< \tfrac 1 2 \log \tfrac 
	{\beta  P ( \beta  P + (1+g)^2 Q' +N' + N)}
	{(N'+N)(\beta  P +\alpha^2(1+g)^2 Q') + (1-\alpha)^2 \beta  (1+g)^2 P Q'}
	, \label{eq:ach_rate} \\
	D &\geq Q'+N' - r\T \Sigma^{-1} r, \label{eq:ach_distortion}
\end{align}
for some  $\alpha \in \Real_+$, $\beta  \in [0,1]$.
\end{thm}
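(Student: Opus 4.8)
The plan is to exhibit an explicit hybrid coding scheme combining analog state signaling with Gelfand--Pinsker coding, and then analyze its rate and distortion using jointly Gaussian test random variables. First I would fix $\beta \in [0,1]$, split the power as $P = \beta P + (1-\beta)P$, and devote $(1-\beta)P$ to an analog (linear) transmission of the observable state component $\tilde V^n$, scaled by $g = \sqrt{(1-\beta)P/Q'}$ so that the analog part has power $g^2 Q' = (1-\beta)P$. The remaining power $\beta P$ carries the Gelfand--Pinsker codeword. Concretely, I would take auxiliary random variable $T = X' + \alpha(\tilde V + g\tilde V) = X' + \alpha(1+g)\tilde V$ where $X' \sim \N(0,\beta P)$ is independent of everything, and the transmitted symbol is $X = X' + g\tilde V$; the parameter $\alpha$ is the Gelfand--Pinsker/inflation coefficient, left free rather than pinned to Costa's value. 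The effective channel seen by the decoder is $Y = X' + (1+g)\tilde V + W + Z$, i.e.\ the ``state'' for the Gelfand--Pinsker sub-problem is $(1+g)\tilde V$ with side information $\tilde V$ known at the encoder, and $W+Z \sim \N(0, N'+N)$ plays the role of additional unknown noise.

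Next I would invoke the Gelfand--Pinsker theorem (in its Gaussian form, with a power constraint): the message rate $R < I(T;Y) - I(T;\tilde V)$ is achievable. Computing these two mutual informations with the jointly Gaussian $(T, \tilde V, Y)$ above is a routine but careful Gaussian calculation; I expect the difference $I(T;Y) - I(T;\tilde V)$ to simplify, after clearing denominators, exactly to the right-hand side of~\eqref{eq:ach_rate}. The matrix $\Sigma$ in~\eqref{eq:ach_Rdef} is the covariance of $(Y, T)$: one reads off $\Var(Y) = (1+g)^2 Q' + \beta P + N' + N$, $\Var(T) = \beta P + \alpha^2(1+g)^2 Q'$, and $\Cov(Y,T) = \beta P + \alpha(1+g)^2 Q'$, matching the displayed entries. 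The vector $r$ in~\eqref{eq:ach_rdef} is $\Cov\!\big((Y,T)\T, S\big)$ with $S = \tilde V + W$: indeed $\Cov(Y,S) = (1+g)Q' + N'$ and $\Cov(T,S) = \alpha(1+g)Q'$.

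For the distortion claim, the receiver forms the MMSE estimate of $S = \tilde V + W$ from what it can reconstruct, namely $(Y, T)$ — here $T$ is recovered (up to vanishing error) as a by-product of successful Gelfand--Pinsker decoding, since the decoder identifies the chosen codeword. The resulting MMSE is $\Var(S) - r\T \Sigma^{-1} r = Q' + N' - r\T\Sigma^{-1}r$, which is exactly~\eqref{eq:ach_distortion}. I would note that the joint typicality decoder, upon finding the correct bin index and the jointly typical codeword $t^n$, yields both $\hat m = m$ and a sequence $t^n$ that is jointly typical with $y^n$ and $s^n$ in the Gaussian (weakly typical) sense, so the empirical second moments converge to the planned covariances and the time-averaged squared error converges to the stated MMSE; a standard argument handles the conversion from weak typicality to the $\limsup$ distortion bound with an arbitrarily small slack that is then absorbed by taking closures.

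The main obstacle, I expect, is not any single step but rather the bookkeeping: first, verifying that the free coefficient $\alpha$ is genuinely a free parameter here (unlike in pure dirty-paper coding, where the optimal $\alpha$ is forced) and that the achievable region is the union over $\alpha \in \Real_+$ and $\beta \in [0,1]$; and second, pushing the Gaussian Gelfand--Pinsker machinery through with a power constraint and a continuous alphabet, which technically requires either a quantization/discretization argument or an appeal to an established Gaussian Gelfand--Pinsker coding theorem, together with the standard care needed to ensure the input power constraint $(1/n)\sum \E[X_i^2] \le P$ holds (it does, since $\E[X^2] = \beta P + g^2 Q' = \beta P + (1-\beta)P = P$) and that the decoded auxiliary sequence is available for the estimation step. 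Once the algebra for $I(T;Y) - I(T;\tilde V)$ is carried out and matched to~\eqref{eq:ach_rate}, the rest is assembly.
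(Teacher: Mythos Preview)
Your proposal is correct and follows essentially the same approach as the paper: a hybrid of analog state amplification (gain $g$) and Gelfand--Pinsker coding with auxiliary $T=X'+\alpha(1+g)\tilde V$ (the paper's $U$), yielding rate $I(T;Y)-I(T;\tilde V)$ and distortion equal to the MMSE of $S$ from $(Y,T)$. The paper's proof differs only in notation and in omitting the Gaussian/discretization caveats you flag.
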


\begin{proof}
Consider the following coding scheme, in which the transmit signal consists of a scaled version of the channel state sequence and a Gelfand--Pinsker codeword matched to the amplified state sequence. More formally, we generate a codebook as follows.

\subsubsection*{Codebook generation} Fix $\alpha$ and $\beta$. Let $\tilde X \sim \N(0,\beta  P)$ be independent of the state $\tilde V$, and $U = \tilde X + \alpha (1+g) \tilde V$. Define an auxiliary rate $\tilde R \geq R$. For each message $m \in \natSet{2^{nR}}$, generate a subcodebook $\Cc(m)$ consisting of $2^{n(\tilde R - R)}$ sequences $u^n(m,l)$, for $l\in \natSet{2^{n(\tilde R-R)}}$, each independently generated according to $\prod_{i=1}^n p_U(u_i)$.
\subsubsection*{Encoding}
Fix $\eps'>0$. To communicate message $m$ given the state sequence $\tilde v^n$, the sender finds an index $l$ such that $(u^n(m,l), \tilde v^n) \in \Typprime(U,\tilde V)$, and transmits
\begin{align*}
	x_i &= g \tilde v_i + \tilde x_i, \quad \text{for $i \in \natSet n$},
\end{align*}
where $\tilde x_i = u_i(m,l) - \alpha(1+g) \tilde v_i$.
Observe that by construction, the sequence $\tilde x^n$ is jointly typical with the state sequence $\tilde v^n$, that is, symbolwise pairs $(\tilde x_i, \tilde v_i)$ are asymptotically independent. Therefore, the average power constraint $P$ is satisfied. 

\subsubsection*{Decoding and analysis of the probability of error}
Observe $y^n$. Let $\eps > \eps'$. Declare that message $\hat m$ has been sent if $(\hat m,\hat l)$ is the unique index pair such that
\begin{align*}
	(u^n(\hat m,\hat l), y^n) \in \Typ.
\end{align*}
By the result of Gelfand--Pinsker for channels with state~\cite{Gelfand--Pinsker1980a}, the probability of decoding error vanishes as $n\to \infty$ if 
\begin{align*}
	R &< I(U;Y) - I(U; \tilde V).
\end{align*}
To evaluate the terms, recall that 
\begin{align}
	Y &= (1+g) \tilde V + \tilde X + W + Z, \label{eq:Y} \\
	U &= \tilde X + \alpha(1+g) \tilde V, \label{eq:U} 
\end{align}
where $\tilde V$, $\tilde X$, $W$, and $Z$ are independent Gaussians of variances $Q'$, $\beta  P$, $N'$, and $N$. It is not hard to see that the rate condition evaluates to~\eqref{eq:ach_rate}.

\subsubsection*{Estimation and analysis of estimation error}
For each symbol time $i \in \natSet n$, construct the best MSE estimate of $s_i$ given the observations $y_i$ and $u_i(\hat m,\hat l)$, where $\hat m$ and $\hat l$ are the decoded message and subcode index, respectively. Omitting the time index, recall~\eqref{eq:Y}, \eqref{eq:U} and 
\begin{align}
	S &= \tilde V + W,  \label{eq:S}
\end{align}
Hence the joint distribution of $(S, Y, U)$ is 
\begin{align*}
	\begin{bmatrix}
		S\\ Y \\U
	\end{bmatrix}
	\sim \N \Biggl( 0, \begin{bmatrix}Q'+N' & r\T \\ r & \Sigma \end{bmatrix}
	\Biggr),
\end{align*}
where $r$ and $\Sigma$ are shorthand for the cross-correlation vector between $S$ and $[Y,U]$, and the autocorrelation matrix of $[Y,U]$, respectively, and are given in~\eqref{eq:ach_rdef} and~\eqref{eq:ach_Rdef}.
The best estimate is 
\begin{align}
	\hat s_i &= r\T \Sigma^{-1} \begin{bmatrix} y_i \\ u_i(\hat m,\hat l) \end{bmatrix},
	\label{eq:shat}
\end{align}
and the mean square error satisfies~\eqref{eq:ach_distortion}.
This concludes the proof of Theorem~\ref{thm:ib}.
\end{proof}

\begin{remark}
	Recall that in Costa's dirty paper coding~\cite{Costa1983}, the parameter $\alpha$ is chosen as 
	\begin{align} 
		\alpha &= \frac{\beta  P}{\beta  P + N}.
		\label{eq:costasChoice}
	\end{align}
	It turns out that varying $\alpha$ as in the theorem achieves a larger inner bound in our setting.
\end{remark}

\begin{remark}
The estimate $\hat s_i$ in~\eqref{eq:shat} is independent of $u_i$ when 
the second component of $r\T \Sigma^{-1}$ is zero,
\begin{align*}
	\begin{bmatrix} (1+g)Q'+N' \\ \alpha(1+g)Q' \end{bmatrix}\T
	\begin{bmatrix} -\beta P - \alpha(1+g)^2Q') \\ (1+g)^2Q'+\beta  P + N'+N \end{bmatrix} &= 0,
\end{align*}
which occurs when 
\begin{align}
	\alpha &= \frac{\beta  P ((1+g)Q' + N')}{(1+g)Q'(\beta P + N - gN')}.
	\label{eq:uIsUseless}
\end{align}
At all other values of $\alpha$, $u_i$ is useful in estimating $s_i$.
As a side note, observe that if there is no state observation noise (when $N'=0$),~\eqref{eq:uIsUseless} reduces to~\eqref{eq:costasChoice}.
\end{remark}

\begin{remark}  Numerical computations indicate that the inner bound in Theorem~\ref{thm:ib} cannot be improved by (1) reducing the gain $g$, (2) using part of the message rate to send a digital description of $\tilde V^n$, or (3) diverting a fraction of power to send an additional message by superposition coding.
\end{remark}

\vspace{3mm}
\section{Outer bounds}  \label{sec:ob}
We present two outer bounds to the rate--state-distortion region. Recall $\lambda = Q/(Q+\sigma_u^2)$. 
\begin{thm} \label{thm:ob1}
If a rate--state-distortion pair $(R,D)$ is achievable, then for all $\Nb \in [0,N]$, there exists an $\bar r \in [0, \sqrt{P (Q+\sigma_u^2)}]$ such that
\begin{align}
	\hspace*{-2mm} R &\leq \frac 1 2 \log \frac
		{(Q+\sigma_u^2)(P+N + \Es) - {\bar r}^2}
		{(Q+\sigma_u^2) (N-\Nb)},
	\label{eq:ob1_Rbound} \\
	\hspace*{-2mm} D &\geq \left(1 + 2^{2R} \cdot \frac{Q(\Nb+\sigma_u^2)(N-\Nb)}{\Nb \sigma_u^2(P + Q + N + 2 \lambda \bar r)} \right) \Es,
	\label{eq:ob1_Dbound}
\end{align}
where
\begin{align*}
	\Es &= \frac {Q \Nb \sigma_u^2}{Q\Nb + Q \sigma_u^2 + \Nb \sigma_u^2}.
\end{align*}
\end{thm}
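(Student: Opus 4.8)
The plan is a genie‑aided converse: split the channel noise, hand the decoder a carefully chosen genie, and reduce both the rate inequality and the distortion inequality to single‑letter statements controlled by one free scalar, the (normalized) correlation between the channel input and the state observation. Concretely, fix $\Nb\in[0,N]$, write $Z_i=\Zb_i+\Zbb_i$ with independent $\Zb_i\sim\N(0,\Nb)$ and $\Zbb_i\sim\N(0,N-\Nb)$, and reveal to the decoder the side information $G^n=(V^n,\,S^n+\Zb^n)$. Introduce $\bar r=\bigl|\tfrac1n\sum_{i=1}^n\E[X_iV_i]\bigr|$; Cauchy--Schwarz and the power constraint give $\bar r\le\sqrt{P(Q+\sigma_u^2)}$, and since the bounds see the signed correlation only through $\bar r^2$ and through an additive $2\lambda\bar r$ in a denominator, using the modulus is harmless. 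Two facts drive everything. First, because the input depends on the state only through $V^n$ and $S_i=\lambda V_i+W_i$ with the innovation $W^n$ independent of $(V^n,M)$, one has the identity $\tfrac1n\sum_i\E[X_iS_i]=\lambda\cdot\tfrac1n\sum_i\E[X_iV_i]$, hence $\tfrac1n\sum_i\E[Y_i^2]\le P+Q+N+2\lambda\bar r$, which is where that term enters~\eqref{eq:ob1_Dbound}. Second, conditioned on $G^n$ the output is $Y_i=f(M,V^n)_i+(S_i+\Zb_i)+\Zbb_i$, so $Y^n$ is independent of $S^n$ given $G^n$, the best estimate of $S_i$ from $(V_i,S_i+\Zb_i)$ has MMSE exactly $\Es=(1/Q+1/\sigma_u^2+1/\Nb)^{-1}$ (the stated $\Es$), and $h(Y^n\mid M,G^n)=\tfrac n2\log 2\pi e(N-\Nb)$.

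For the rate bound I would run Fano plus data processing, $nR\le I(M;Y^n)+n\eps_n\le I(M;Y^n\mid G^n)+n\eps_n=h(Y^n\mid G^n)-h(Y^n\mid M,G^n)+n\eps_n$, plug in $h(Y^n\mid M,G^n)=\tfrac n2\log 2\pi e(N-\Nb)$, and single‑letterize $h(Y^n\mid G^n)$ by Jensen together with a linear‑MMSE bound on $\Var(X_i\mid V_i,S_i+\Zb_i)$ (which, using the correlation identity, reduces $\E[X_i^2]$ by $\bar r_i^2/(Q+\sigma_u^2)$); averaging and Cauchy--Schwarz over $i$ then give~\eqref{eq:ob1_Rbound}. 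For the distortion bound I would start from the maximum‑entropy inequality $h(S^n\mid Y^n)\le\tfrac n2\log 2\pi e\,D$, use the decomposition $h(S^n\mid Y^n)=h(S^n\mid Y^n,G^n)+I(S^n;G^n\mid Y^n)=\tfrac n2\log 2\pi e\,\Es+I(S^n;G^n\mid Y^n)$ (the first equality by the structural fact), so that $D\ge\Es\cdot 2^{\frac2n I(S^n;G^n\mid Y^n)}$, and then lower‑bound $I(S^n;G^n\mid Y^n)$: the output $Y^n$ can carry only about $\tfrac n2\log\frac{P+Q+N+2\lambda\bar r}{N-\Nb}$ worth of information, of which $nR$ is spent decoding $M$, so the genie can still strip off at least the remaining amount of the receiver's uncertainty about $S^n$. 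Carrying this out — via Fano on $H(M\mid Y^n)$, the bound $h(Y^n)\le\tfrac n2\log 2\pi e(P+Q+N+2\lambda\bar r)$, and the Gaussian entropies of $G^n$ and of $(U^n,\Zb^n)$ — is expected to produce $I(S^n;G^n\mid Y^n)\ge\tfrac n2\log\!\bigl(1+2^{2R}\tfrac{Q(\Nb+\sigma_u^2)(N-\Nb)}{\Nb\sigma_u^2(P+Q+N+2\lambda\bar r)}\bigr)-n\eps_n'$, whence~\eqref{eq:ob1_Dbound} after $n\to\infty$.

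The hard part is precisely that last lower bound on $I(S^n;G^n\mid Y^n)$: extracting the exact coefficient of $2^{2R}$ together with the leading ``$+1$'' is what forces the coupling between $\Nb$, $\Es=(1/Q+1/\sigma_u^2+1/\Nb)^{-1}$ and $N-\Nb$ in the statement, since the noise split must be threaded so that the genie‑aided estimator's residual variance is exactly $\Es$ while the message‑decoding noise floor is exactly $N-\Nb$. The only other point that needs care is controlling the encoder's freedom: a clean outer bound is not obvious until one observes the identity $\tfrac1n\sum_i\E[X_iS_i]=\lambda\cdot\tfrac1n\sum_i\E[X_iV_i]$, which collapses all of the input--state correlation structure onto the single parameter $\bar r$; after that the remaining manipulations are routine Gaussian/Jensen/Cauchy--Schwarz bookkeeping.
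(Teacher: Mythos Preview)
Your genie choice, noise split, definition of $\bar r$, and derivation of the rate bound via $nR\le h(Y^n\mid G^n)-\tfrac n2\log 2\pi e(N-\Nb)$ match the paper's argument essentially step for step. The upper bound on $\Delta=\tfrac1nI(G^n;Y^n)$ (combining $h(Y^n)\le\tfrac n2\log 2\pi e(P+Q+N+2\lambda\bar r)$ with $nR\le h(Y^n\mid G^n)-h(\Zbb^n)$) is also what the paper does.

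The gap is in the distortion step. Your chain
\[
\tfrac n2\log 2\pi e\,D \;\ge\; h(S^n\mid Y^n)\;=\;\tfrac n2\log 2\pi e\,\Es \;+\; I(S^n;G^n\mid Y^n)
\]
is correct, but the ingredients you list (Fano on $H(M\mid Y^n)$, $h(Y^n)\le\ldots$, and the Gaussian entropies of $G^n$ and $(U^n,\Zb^n)$) only yield
\[
I(S^n;G^n\mid Y^n)\;\ge\; h(G^n\mid Y^n)-h(G^n\mid S^n)\;=\;\tfrac n2\log\frac{Q}{\Es}-n\Delta,
\]
which gives $D\ge Q\,2^{-2\Delta}$ rather than $D\ge \Es\bigl(1+(Q/\Es-1)\,2^{-2\Delta}\bigr)$. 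The two differ exactly by the ``$+1$'' you flagged, and the weaker version does \emph{not} imply \eqref{eq:ob1_Dbound}. The paper obtains the additive form by invoking the \emph{remote (indirect) rate--distortion} bound from \cite{Tian2012}: write $S_i=T_i+E_i$ with $T_i=\E[S_i\mid G_i]$, $E_i\sim\N(0,\Es)$ independent of $(G^n,Y^n)$, so that the best quadratic estimate of $S_i$ from any description of $G^n$ at rate $\Delta$ incurs error at least $(Q-\Es)2^{-2\Delta}+\Es$. If you want to stay within your entropy framework, the same decomposition together with the conditional entropy power inequality gives
\[
2^{\frac2n h(S^n\mid Y^n)} \;\ge\; 2^{\frac2n h(T^n\mid Y^n)} + 2\pi e\,\Es \;\ge\; 2\pi e\,(Q-\Es)\,2^{-2\Delta} + 2\pi e\,\Es,
\]
hence $I(S^n;G^n\mid Y^n)\ge \tfrac n2\log\bigl(1+(Q/\Es-1)2^{-2\Delta}\bigr)$, which is precisely the lower bound you asserted but could not derive with the listed tools. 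Without either the remote-RD argument or EPI, your route falls short of the stated theorem.
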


\begin{proof}
Following~\cite{Tian2012}, let us divide the noise $Z_i$ into two independent components $\Zb_i$ and $\Zbb_i$, with 
\begin{align*}
	\Zb_i &\sim \N(0,\Nb), \\
	\Zbb_i &\sim \N(0,N-\Nb).
\end{align*}
Note that $\Es$, as defined in the theorem, denotes the mean square error of the best linear estimator of $S_i$ given $V_i= S_i+U_i$ and $S_i + \Zb_i$.
Define the quantity 
\begin{align*}
	\Delta &= \tfrac 1 n I(V^n, S^n+\Zb^n; Y^n).
\end{align*}
Using results from the remote source coding problem in rate--distortion theory, it was shown in~\cite{Tian2012} that 
the information measure $\Delta$ and the quadratic distortion $D$ are related as
\begin{align*}
	\Delta &\geq \tfrac 1 2 \log\left( \frac 
		{Q(\Nb+\sigma_u^2)}
		{\Nb \sigma_u^2 (D/\Es - 1)}
	\right),
\end{align*}
or equivalently,
\begin{align}
	D &\geq \left( 1 + 2^{-2\Delta} \cdot \frac{Q(\Nb+\sigma_u^2)}{\Nb \sigma_u^2} \right) \Es
	. \label{eq:ob1_DeltaToD}
\end{align}
This allows us to translate upper bounds on $\Delta$ to lower bounds on $D$. 
Next, we obtain an outer bound on the achievable $(R,\Delta)$ region. First, it follows from Fano's inequality that
\begin{align}
	nR &= H(M) \notag \\
	&= H(M \cond V^n, S^n+\Zb^n) \notag \\
	&\leq I( M; Y^n \cond V^n, S^n+\Zb^n) + n\eps_n \notag \\	
	&= h(Y^n \cond V^n, S^n+\Zb^n) - h( \Zbb^n ) + n\eps_n.  \label{eq:ob1_RFano}
\end{align}
Let
\begin{align}
	\bar r &= \frac 1 n \sum_{i=1}^n \abs{\E(X_i V_i)}.  \label{eq:barr}
\end{align}
Using covariance matrices to bound differential entropies, 
it can be shown that
\begin{align}
	h(Y^n)  
	&\leq \tfrac n 2 \log \bigl( 2\pi e (P \! + \! Q \! + \! N \! + \! 2\lambda \bar r ) \bigr), \label{eq:ob2_hY}
	\\
	\hspace*{-2mm} h(Y^n \cond V^n, S^n \! + \! \Zb^n) 
	&\leq \tfrac n 2 \log\bigl( 2 \pi e \bigl(
		P+N + \Es \notag \\
		&\hspace{21mm} - {\bar r}^2/(Q+\sigma_u^2)
	\bigr) \bigr).
	\label{eq:ob2_hYcond}
\end{align}
Substituting these inequalities and the definition of $\Delta$ into~\eqref{eq:ob1_RFano}, we conclude that $(R,\Delta)$ must satisfy~\eqref{eq:ob1_Rbound} and
\begin{align*}
	\Delta &\leq \tfrac 1 2 \log \frac{ P + Q + N + 2 \lambda \bar r}{ N-\Nb}  - R
\end{align*}
as $n \to \infty$. 
Using the last inequality with~\eqref{eq:ob1_DeltaToD}, we obtain~\eqref{eq:ob1_Dbound}.

Finally, note that $\bar r \in [0, \sqrt{P (Q+\sigma_u^2)}]$ since
\begin{align*}
	\bar r 
	&\annleq{a} \sqrt{\frac 1 n \sum_{i=1}^n \E(X_i V_i)^2} \\
	&\annleq{b} \sqrt{ \frac 1 n \sum_{i=1}^n \E(X_i^2) (Q+\sigma_u^2) } \\
	&\annleq{c} \sqrt{P (Q+\sigma_u^2)},
\end{align*}
where (a) follows from the generalized mean inequality, (b) follows from the Cauchy--Schwarz inequality, and (c) follows from the power constraint. This concludes the proof of the outer bound in Theorem~\ref{thm:ob1}.
\end{proof}

\vspace{4mm}
In order to state the second outer bound, we define the function $f$ as 
\begin{align*}  
	f(x) &= (\sqrt{x} - \sqrt{N'/Q'} \sqrt{Q'-x})_+^2,
\end{align*}
where $x_+$ denotes the positive part $\max\{x,0\}$. It can be shown that
$f$ is convex and non-decreasing.
%
\begin{thm} \label{thm:ob2} 
	If a rate--distortion pair $(R,D)$ is achievable, then it must satisfy 
	\begin{align}
		R &\leq \frac 1 2 \log \frac
			{\sigma_u^2 (N+P+Q)  + Q(N+P)-{\bar r}^2}
			{(Q+ \sigma_u^2)(N+N')}
		, \label{eq:ob2_Rbound} \\
		D &\geq f\left( \frac{Q' (N+N')}{P+Q+N+2\lambda\bar r} \ 2^{2R} \right)
			\label{eq:ob2_Dbound}
	\end{align}
	for some $\bar r \in [0, \sqrt{P (Q+\sigma_u^2)}]$.
\end{thm}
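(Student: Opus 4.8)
The plan is to prove the rate inequality~\eqref{eq:ob2_Rbound} and the distortion inequality~\eqref{eq:ob2_Dbound} essentially independently, reusing the single-letter entropy estimates from the proof of Theorem~\ref{thm:ob1}. For~\eqref{eq:ob2_Rbound}: since $M$ is independent of the observable state $\tilde V^n$, Fano's inequality gives $nR=H(M\condbig\tilde V^n)\le I(M;Y^n\condbig\tilde V^n)+n\eps_n=h(Y^n\condbig\tilde V^n)-h(Y^n\condbig M,\tilde V^n)+n\eps_n$. The subtracted term is exact: given $(M,\tilde V^n)$ the codeword $X^n$ is determined and $Y^n-X^n-\tilde V^n=W^n+Z^n$, so $h(Y^n\condbig M,\tilde V^n)=\tfrac n2\log 2\pi e(N+N')$. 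For the first term I would pass to single letters, $h(Y^n\condbig\tilde V^n)\le\sum_i h(Y_i\condbig\tilde V_i)$, bound each $h(Y_i\condbig\tilde V_i)$ by the Gaussian value with variance $\E(Y_i^2)-(\E(Y_i\tilde V_i))^2/Q'$, and use concavity of $\log$. A short computation (using $\tilde V_i=\lambda V_i$ and $\lambda^2/Q'=1/(Q+\sigma_u^2)$) gives $\E(Y_i^2)-(\E(Y_i\tilde V_i))^2/Q'=\E(X_i^2)+N'+N-(\E(X_iV_i))^2/(Q+\sigma_u^2)$; averaging, invoking the power constraint, and using the power-mean inequality $\tfrac1n\sum_i(\E(X_iV_i))^2\ge\bar r^2$ with $\bar r$ as in~\eqref{eq:barr}, one gets $h(Y^n\condbig\tilde V^n)\le\tfrac n2\log 2\pi e\bigl(P+N+N'-\bar r^2/(Q+\sigma_u^2)\bigr)$. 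Letting $n\to\infty$ and clearing the $(Q+\sigma_u^2)$ denominator, with $(Q+\sigma_u^2)N'=Q\sigma_u^2$, yields~\eqref{eq:ob2_Rbound}; the admissible range of $\bar r$ is the one already established in Theorem~\ref{thm:ob1}.

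For~\eqref{eq:ob2_Dbound} I may assume the reconstruction is the conditional mean $\hat S^n=\E[S^n\condbig Y^n]$, so that $D\ge\tfrac1n\E\norm{S^n-\hat S^n}^2$ and, by the orthogonality principle, $\tfrac1n\E\norm{\hat S^n}^2=Q-\tfrac1n\E\norm{S^n-\hat S^n}^2\ge Q-D$. Set $E:=\tfrac1n\,\mathrm{mmse}(\tilde V^n\condbig Y^n)$. First I would lower-bound $E$ using the rate: from $I(M,\tilde V^n;Y^n)=h(Y^n)-h(Y^n\condbig M,\tilde V^n)=h(Y^n)-\tfrac n2\log 2\pi e(N+N')$, the chain rule $I(M,\tilde V^n;Y^n)=I(M;Y^n)+I(\tilde V^n;Y^n\condbig M)$, Fano ($I(M;Y^n)\ge nR-n\eps_n$), the bound~\eqref{eq:ob2_hY} on $h(Y^n)$, and $I(\tilde V^n;Y^n)\le I(\tilde V^n;Y^n\condbig M)$ (which holds because $M\perp\tilde V^n$), one obtains $I(\tilde V^n;Y^n)\le\tfrac n2\log\bigl((P+Q+N+2\lambda\bar r)/((N+N')2^{2R})\bigr)+n\eps_n$. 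Since $h(\tilde V^n\condbig Y^n)\le\tfrac n2\log 2\pi eE$ gives $E\ge Q'\,2^{-2I(\tilde V^n;Y^n)/n}$, this yields, in the limit, $E\ge x$ with $x:=\tfrac{Q'(N+N')}{P+Q+N+2\lambda\bar r}2^{2R}$ the argument of $f$ in~\eqref{eq:ob2_Dbound}.

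Since $f$ is non-decreasing, it then suffices to show $D\ge f(E)$, and here the correlation structure enters. Write $S^n=\tilde V^n+W^n$ with $\tilde V^n\perp W^n$ of variances $Q'$, $N'$, and $\hat S^n=\hat{\tilde V}^n+\hat W^n$ with $\hat{\tilde V}^n=\E[\tilde V^n\condbig Y^n]$, $\hat W^n=\E[W^n\condbig Y^n]$, so that $\tfrac1n\E\norm{\hat{\tilde V}^n}^2=Q'-E$. The key step is the inequality $\tfrac1n\E\norm{\hat W^n}^2\le\tfrac{N'}{Q'}E$ --- the purely unobservable component $W^n$ can be resolved only through the uncertainty that remains about the observable component $\tilde V^n$. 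Granting it, the $L^2$ triangle inequality gives $\sqrt{\tfrac1n\E\norm{\hat S^n}^2}\le\sqrt{Q'-E}+\sqrt{N'E/Q'}$, and combining with $Q-D\le\tfrac1n\E\norm{\hat S^n}^2$ we get $D\ge Q-\bigl(\sqrt{Q'-E}+\sqrt{N'E/Q'}\bigr)^2$. Using $Q-Q'=N'$ this right-hand side equals $\bigl(\sqrt E-\sqrt{N'/Q'}\sqrt{Q'-E}\bigr)^2=f(E)\ge f(x)$, which is~\eqref{eq:ob2_Dbound}.

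The step I expect to be the main obstacle is the inequality $\tfrac1n\E\norm{\hat W^n}^2\le\tfrac{N'}{Q'}E$. The natural route is to bound the relevant conditional differential entropies and cross-covariances by their Gaussian counterparts --- the device of bounding differential entropies via covariance matrices used in Theorem~\ref{thm:ob1} --- which reduces the claim to a scalar Gaussian model $Y=c\tilde V+W+Z'$ in which it collapses to $\Var(Z')\ge0$; the care lies in carrying out this reduction while keeping track of $\bar r$ and of the fact that the input $X^n$ is correlated with $\tilde V^n$.
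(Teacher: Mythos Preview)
Your derivation of the rate inequality~\eqref{eq:ob2_Rbound} and of the upper bound on $\Delta'=\tfrac1n I(\tilde V^n;Y^n)$ (hence the lower bound $E\ge x$ on the block MMSE of $\tilde V^n$) is correct and coincides with the paper's route: Fano conditioned on $\tilde V^n$, the exact evaluation $h(Y^n\cond M,\tilde V^n)=\tfrac n2\log 2\pi e(N+N')$, the covariance bound on $h(Y^n\cond\tilde V^n)$, and the bound~\eqref{eq:ob2_hY} on $h(Y^n)$.

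The divergence is in the step linking $D$ to $E$, and there your proposal has a genuine gap. The inequality $\tfrac1n\E\|\hat W^n\|^2\le\tfrac{N'}{Q'}E$ is equivalent to $\tfrac1n\,\mathrm{mmse}(W^n\cond Y^n)+\tfrac{N'}{Q'}\cdot\tfrac1n\,\mathrm{mmse}(\tilde V^n\cond Y^n)\ge N'$, a joint ``uncertainty'' relation between the in-channel noise $W^n$ and the pre-encoder signal $\tilde V^n$. In the scalar Gaussian model $Y=c\tilde V+W+Z'$ it does indeed reduce to $\Var(Z')\ge0$, but the reduction you sketch does not go through once $X^n$ is an arbitrary (non-Gaussian) function of $(M,\tilde V^n)$: the quantity $\E\|\hat W^n\|^2=\tfrac{(N')^2}{(N'+N)^2}\bigl(n(N'+N)-\mathrm{mmse}(A^n\cond Y^n)\bigr)$ with $A^n=X^n+\tilde V^n$ depends on the entire non-Gaussian law of $A^n$, whereas $\mathrm{mmse}(\tilde V^n\cond Y^n)$ depends additionally on the joint law of $(\tilde V^n,A^n)$. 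Bounding differential entropies by covariances (the device from Theorem~\ref{thm:ob1}) controls each of these two MMSEs separately in the wrong direction for your purpose, and there is no evident Gaussian-extremality statement for their \emph{ratio}. So the ``main obstacle'' you flag is exactly that---an obstacle, not a step.

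The paper bypasses this entirely by moving the hard work from the channel to the reconstruction. Via data processing, $\Delta'\ge\tfrac1n I(\tilde V^n;\hat S^n)$, and the latter is lower-bounded symbolwise using $D'_i$, the \emph{linear} MMSE of $\tilde V_i$ given the \emph{scalar} $\hat S_i$, yielding $\tfrac1n\sum_i D'_i\ge Q'2^{-2\Delta'}$. The crucial lemma $D_i\ge f(D'_i)$ is then a statement purely about the triple $(\tilde V_i,W_i,\hat S_i)$ with $S_i=\tilde V_i+W_i$; its proof is a single Cauchy--Schwarz bound on $\E(F'_iW_i)$ and needs no channel structure at all. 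Convexity of $f$ and Jensen finish the argument. In effect, by passing through $\hat S_i$ instead of $Y^n$, the paper replaces your vector inequality about $\hat W^n$ by a trivially true scalar Cauchy--Schwarz, at the negligible cost of one data-processing step.
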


\begin{proof}
Let $\Delta' = \frac 1 n I(\tilde V^n; Y^n)$. To relate $\Delta'$ and the distortion $D$,  note that
\begin{align*}
	\Delta' 
	&\anngeq{a} \tfrac 1 n I(\tilde V^n; \hat S^n) \notag \\
	&= \tfrac 1 n h(\tilde V^n) - \tfrac 1 n h(\tilde V^n \cond \hat S^n) \notag \\
	&\geq \tfrac 1 n h(\tilde V^n) - \tfrac 1 n \sum_{i=1}^n h(\tilde V_i \cond \hat S_i) \notag \\
	&\geq \tfrac 1 n h(\tilde V^n) - \tfrac 1 {2n} \sum_{i=1}^n \log( 2\pi e \Var(\tilde V_i \cond \hat S_i) ) \notag \\
	&\anngeq{b} \tfrac 1 n h(\tilde V^n) - \tfrac 1 2 \log \Bigl( 2\pi e \cdot \tfrac 1 n \sum_{i=1}^n \Var(\tilde V_i \cond \hat S_i) \Bigr) \notag \\
	&\anngeq{c} \tfrac 1 n h(\tilde V^n) - \tfrac 1 2 \log \Bigl( 2\pi e \cdot \tfrac 1 n \sum_{i=1}^n D'_i \Bigr) \notag \\
	&= \tfrac 1 2 \log(Q')- \tfrac 1 2 \log \Bigl(\tfrac 1 n \sum_{i=1}^n D'_i \Bigr),
\end{align*}
where (a) follows from the data processing inequality, (b) follows from Jensen's inequality and (c) follows from defining $D'_i$ to be the mean square error of best \emph{linear} estimator of $\tilde V_i$ given $\hat S_i$. Hence
\begin{align}
	\tfrac 1 n \sum_{i=1}^n D'_i &\geq Q' \cdot 2^{-2 \Delta'}.
		\label{eq:ob2_DeltaprimeToDprime}
\end{align}
Let $D_i = \Var( S_i - \hat S_i)$ be the mean square instantaneous state estimation error at the receiver at time $i$.
In Appendix~\ref{sec:DprimeToD}, we use the Cauchy--Schwarz inequality to show
\begin{align*}
	D_i &\geq f(D'_i).
\end{align*}
Thus,
\begin{align}
	D = \tfrac 1 n \sum_{i=1}^n D_i 
	& \geq \tfrac 1 n \sum_{i=1}^n f(D'_i) \notag \\
	& \anngeq{a} f \left( \tfrac 1 n \sum_{i=1}^n D'_i \right) \notag \\
	& \anngeq{b} f( Q' \cdot 2^{-2 \Delta'} ), \label{eq:ob2_DeltaprimeToD}
\end{align}
where (a) follows from Jensen's inequality and (b) follows from~\eqref{eq:ob2_DeltaprimeToDprime}.

Next, we obtain an outer bound on the achievable $(R,\Delta')$ region. From Fano's inequality, we have
\begin{align}
	nR 
	&= H(M \cond \tilde V^n) \notag \\
	&\leq I( M; Y^n \cond \tilde V^n) + n\eps_n \notag \\	
	&= h(Y^n \cond \tilde V^n) - \underbrace{h( Y^n \cond \tilde V^n, M )}_{=\tfrac n 2 \log(2\pi e(N'+N))} + n\eps_n.  \label{eq:ob2_RFano}
\end{align}
Reusing definition~\eqref{eq:barr}, and using covariance matrices to bound differential entropies, it can be shown that
\begin{align*}
	h(Y^n \cond \tilde V^n) 
	&\leq \tfrac n 2 \log\Bigl( 2 \pi e \frac
		{ \sigma_u^2 (N\!+\!P\!+\!Q)  + Q(N\!+\!P) -{\bar r}^2 }
		{Q+ \sigma_u^2}
	\Bigr)
	.
\end{align*}
Substituting this inequality, bound~\eqref{eq:ob2_hY} and the definition of $\Delta'$ into~\eqref{eq:ob2_RFano}, we obtain~\eqref{eq:ob2_Rbound} and
\begin{align*}
	\Delta' &\leq \tfrac 1 2 \log \frac{ P + Q + N + 2 \lambda \bar r}{ N+N'} - R 
\end{align*}
as $n\to\infty$.
Using the last inequality with~\eqref{eq:ob2_DeltaprimeToD} yields~\eqref{eq:ob2_Dbound}, which concludes the proof of Theorem~\ref{thm:ob2}.
\end{proof}

\vspace{3mm}
\section{Numerical examples and Concluding remarks} \label{sec:numExConclusion}

Two examples for the inner bound of Theorem~\ref{thm:ib} and the outer bounds of Theorems~\ref{thm:ob1} and~\ref{thm:ob2} are depicted in Figures~\ref{fig:numExample_lowPower} and~\ref{fig:numExample_highPower}. In the example in Figure~\ref{fig:numExample_highPower}, the transmit power $P$ is sufficiently large to permit a nonzero rate at the minimum distortion value. This is the case where the optimal transmission scheme without a message ($R=0$) as discussed in~\cite{Tian2012} uses less than the full transmit power.

We observe that the outer bounds complement each other, but exhibit a nonnegligible gap from the inner bound. To close this gap, new outer bounding techniques will be necessary, the study of which is the subject of future research.

\begin{figure}
	\centering
	\vspace*{2mm}
	\includesvg{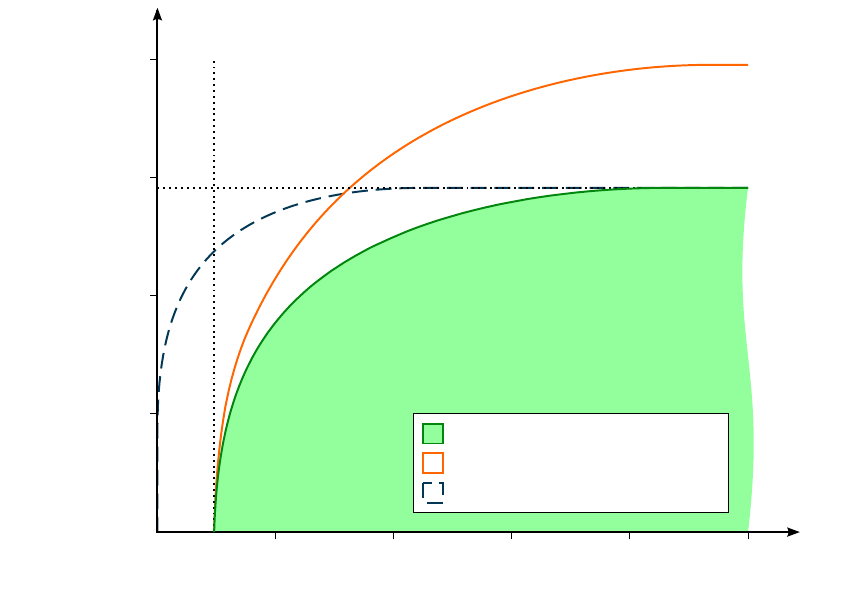}
	\caption{Bounds to the rate--estimation-error region, parameters $Q=10$, $N=1$, $\sigma_u^2=1$, and $P=7.7$ (i.e., $Q' = 9.09$, $N' = 0.91$).}
	\label{fig:numExample_lowPower}
\end{figure}

\begin{figure}
	\centering
	\vspace*{2.5mm}
	\includesvg{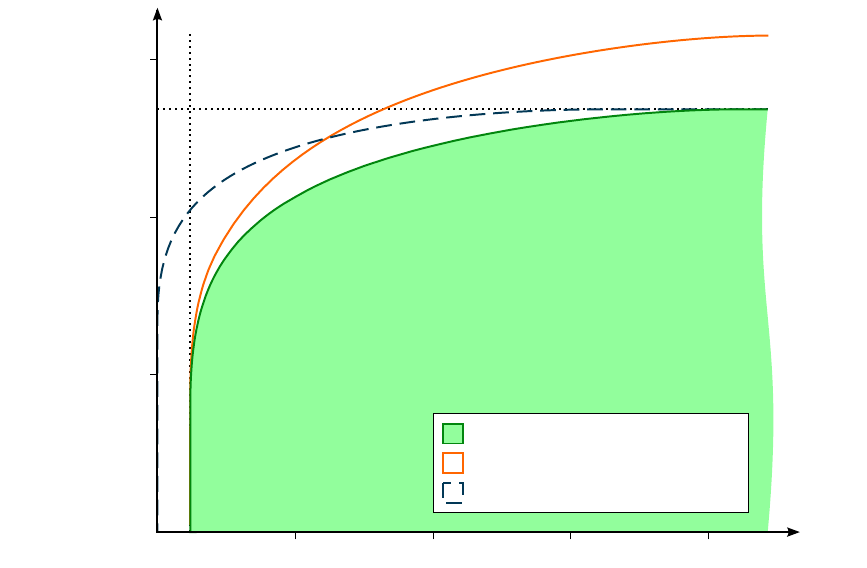}
	\caption{Bounds to the rate--estimation-error regions, parameters $Q=10$, $N=1$, $\sigma_u^2=1$, and $P=77$ (i.e., $Q' = 9.09$, $N' = 0.91$).}
	\label{fig:numExample_highPower}
\end{figure}


\appendix

\subsection{Relation between $D_i$ and $D'_i$} \label{sec:DprimeToD}
Let $E_i = S_i - \hat S_i$ and recall $D_i=\Var( S_i - \hat S_i)=\Var(E_i)$. Likewise, let $\hat V_i$ be the best linear estimator of $\tilde V_i$ given $\hat S_i$, define the corresponding estimation error as $F_i = \tilde V_i - \hat V_i$ and recall $D'_i = \Var(F_i)$.

By the orthogonality principle, we have $\E( F_i \hat V_i ) = 0$, and using $\E( \tilde V_i^2 ) = Q'$, it follows that
\begin{align*}
	\E(\tilde V_i \hat V_i ) & = \E(\hat V_i^2) = Q' - D'_i.
\end{align*}
Since $\hat V_i$ is a scaled version of $\hat S_i$, the variance $D_i$ is lower bounded by the mean square error of the best linear estimator of $S_i$ given $\hat V_i$, namely
\begin{align}
	D_i &\geq Q'+N' - \frac{\E(\hat V_i S_i)^2}{Q'-D'_i}.
		\label{eq:DprimeToD_DiLB}
\end{align}
To evaluate the expectation, consider 
\begin{align*}
	\hat V_i &= \frac{Q'-D'_i}{Q'} \, \tilde V_i + F'_i,
\end{align*}
where the first term in the sum is the best linear estimator of $\hat V_i$ given $\tilde V_i$, and $F'_i$ is the corresponding estimation error. Thus, $\tilde V_i$ and $F'_i$ are uncorrelated and
\begin{align*}
	\Var( F'_i ) &= Q' - D'_i - \frac{(Q'-D'_i)^2}{Q'} = \frac{D'_i(Q'-D'_i)}{Q'}.
\end{align*}
Furthermore, recall
\begin{align*}
	S_i &= \tilde V_i + W_i,
\end{align*}
where $\tilde V_i$ and $W_i$ are independent and $\Var(W_i) = N'$. Thus,
\begin{align}
	\E(\hat V_i S_i) 
	&= \E\left( \left(\frac{Q'-D'_i}{Q'} \, \tilde V_i + F'_i \right)(\tilde V_i + W_i) \right) \notag \\
	&= Q'-D'_i + \E(F'_i W_i).
		\label{eq:DprimeToD_EVtS}
\end{align}
By the Cauchy--Schwarz inequality, 
\begin{align*}
	\abs{ \E(F'_i W_i) } &\leq \sqrt{ \E({F'_i}^2) \E(W_i^2) } \\
	&= \sqrt{ \frac{N' D'_i (Q'-D'_i)}{Q'} }
\end{align*}
Substituting back into~\eqref{eq:DprimeToD_EVtS} and using $D'_i\leq Q'$, this implies
\begin{align*}
	\abs{\E(\hat V_i S_i)}
	&\leq Q'-D'_i + \sqrt{ \frac{N' D'_i (Q'-D'_i)}{Q'} }.
\end{align*}
Further, substituting back into~\eqref{eq:DprimeToD_DiLB} yields
\begin{align*}
	D_i &\geq Q'+N' - \frac{\left(Q'-D'_i + \sqrt{ \frac{N' D'_i (Q'-D'_i)}{Q'} }\right)^2}{Q'-D'_i} \notag \\
	&= \left( \sqrt{D'_i} - \sqrt{N'/Q'} \sqrt{Q'-D'_i} \right)^2   \notag \\
	&\geq \left( \sqrt{D'_i} - \sqrt{N'/Q'} \sqrt{Q'-D'_i} \right)_+^2 \notag \\
	&= f(D'_i),
\end{align*}
which concludes the proof.

\vfill
\bibliographystyle{IEEEtran}
\bibliography{IEEEabrv,references-unified} 

\begin{thebibliography}{1}
\providecommand{\url}[1]{#1}
\csname url@samestyle\endcsname
\providecommand{\newblock}{\relax}
\providecommand{\bibinfo}[2]{#2}
\providecommand{\BIBentrySTDinterwordspacing}{\spaceskip=0pt\relax}
\providecommand{\BIBentryALTinterwordstretchfactor}{4}
\providecommand{\BIBentryALTinterwordspacing}{\spaceskip=\fontdimen2\font plus
\BIBentryALTinterwordstretchfactor\fontdimen3\font minus
  \fontdimen4\font\relax}
\providecommand{\BIBforeignlanguage}[2]{{%
\expandafter\ifx\csname l@#1\endcsname\relax
\typeout{** WARNING: IEEEtran.bst: No hyphenation pattern has been}%
\typeout{** loaded for the language `#1'. Using the pattern for}%
\typeout{** the default language instead.}%
\else
\language=\csname l@#1\endcsname
\fi
#2}}
\providecommand{\BIBdecl}{\relax}
\BIBdecl

\bibitem{Sutivong2005}
A.~Sutivong, M.~Chiang, T.~M. Cover, and Y.-H. Kim, ``Channel capacity and
  state estimation for state-dependent {G}aussian channels,'' \emph{{IEEE}
  Trans. Inf. Theory}, vol.~51, no.~4, pp. 1486--1495, Apr. 2005.

\bibitem{Costa1983}
M.~H.~M. Costa, ``Writing on dirty paper,'' \emph{{IEEE} Trans. Inf. Theory},
  vol.~29, no.~3, pp. 439--441, 1983.

\bibitem{Kim2008}
Y.-H. Kim, A.~Sutivong, and T.~M. Cover, ``State amplification,'' \emph{{IEEE}
  Trans. Inf. Theory}, vol.~54, no.~5, pp. 1850--1859, May 2008.

\bibitem{ElGamalKim}
A.~{El Gamal} and Y.-H. Kim, \emph{Network Information Theory}.\hskip 1em plus
  0.5em minus 0.4em\relax Cambridge University Press, 2011.

\bibitem{Zhang2011}
W.~Zhang, S.~Vedantam, and U.~Mitra, ``Joint transmission and state estimation:
  A constrained channel coding approach,'' \emph{{IEEE} Trans. Inf. Theory},
  vol.~57, no.~10, pp. 7084--7095, Oct. 2011.

\bibitem{Tian2012}
C.~Tian, ``Amplification of the hidden {G}aussian channel states,'' in
  \emph{Proceedings of ISIT}, Boston, MA, Jul. 2012.

\bibitem{Gelfand--Pinsker1980a}
S.~I. Gelfand and M.~S. Pinsker, ``Coding for channel with random parameters,''
  \emph{Probl. Control Inf. Theory}, vol.~9, no.~1, pp. 19--31, 1980.

\end{thebibliography}

\end{document}